\newtheorem{proposition}{Proposition}
\newcommand{\nn}{\nonumber}
\newcommand{\mo}[2]{{\mathcal O}(#1^{#2})}
\newcommand{\dd}{\mathrm{d}}
\begin{document}
\title{A Lie algebra based approach to asymptotic symmetries in general relativity}
\author{Takeshi Tomitsuka}
\affiliation{Graduate School of Science, Tohoku University, Sendai, 980-8578, Japan}
\author{Koji Yamaguchi}\thanks{Current Affiliation : Graduate School of Informatics and Engineering, The University of Electro-Communications, 1-5-1 Chofugaoka, Chofu,
Tokyo 182-8585, Japan}
\affiliation{Graduate School of Science, Tohoku University, Sendai, 980-8578, Japan}
\author{Masahiro Hotta}
\affiliation{Graduate School of Science, Tohoku University, Sendai, 980-8578, Japan}
\begin{abstract}
Asymptotic symmetries of black hole spacetimes have received much attention as a possible origin of the Bekenstein-Hawking entropy in black hole thermodynamics.
    In general, it takes hard efforts to find appropriate asymptotic conditions on a metric and a Lie algebra generating the transformation of symmetries with which the corresponding charges are integrable.
    We here propose an alternative approach to construct building blocks of asymptotic symmetries of a given spacetime metric.
    Our algorithmic approach may make it easier to explore asymptotic symmetries in any spacetime than in conventional approaches.
    As an explicit application, we analyze the asymptotic symmetries on Rindler horizon.
    We find a new class of symmetries related with dilatation transformations in time and in the direction perpendicular to the horizon, which we term superdilatations.
\end{abstract}
\maketitle
\section{Introduction}
Since Hawking radiation can be emitted out of a black hole \cite{hawking_1974_black_hole_explosions}, it is
widely believed that a black hole carries the Bekenstein-Hawking (BH) entropy
$A/(4G),$ where $A$ is the area of the horizon and $G$ is the gravitational constant.
The origin of the BH entropy has been explored for a long time from various points of view.
In field theory, the BH entropy is suggested to be derived from quantum entanglement \cite{Bombelli_1986_entropy_for_BH,Srednicki_1993_entropy_area}.
It is also pointed out that entanglement may be the origin of the BH entropy in quantum gravity
\cite{Susskind_1994,Fiola_1994,Emparan_2006,Azeyanagi_2008}.
Besides, in string theory, special D-branes correspond to extremal black holes in the classical regime. The value of logarithm of the number of BPS states of the branes approaches the value of its corresponding BH entropy \cite{Strominger_1996}.

Recently, soft hair at the horizon \cite{Hotta_2001,Hotta_2002,Hawking_2016} has attracted much
attention as a possible origin of the BH entropy \cite{Afshar_2016,Mirbabayi_2016,Hotta_2016,Mao_2017,Ammon_2017,Bousso_2017,Hotta_2018,Chu_2018,Haco_2018,Raposo_2019,Grumiller_2020,Averin_2020}. In a near-horizon region of a black hole, asymptotic symmetries
emerge and generate microstates which contribute to the BH entropy in the
standard way of statistical mechanics. In 2001, supertranslation and
superrotation with non-vanishing charges were discovered as horizon asymptotic
symmetries of a Schwarzschild black hole in $(1+3)$-dimensional general relativity
\cite{Hotta_2001,Hotta_2002}. Supertranslation is time translation depending on the
position at the horizon, while superrotation is a 2-dimensional general coordinate
transformation on the horizon.
In 2016, Hawking, Perry and Strominger rediscovered the symmetries and named the micro states generated by the
transformations as soft hair \cite{Hawking_2016}.
Their work has stimulated interest in the quest for other symmetries at the horizon \cite{Mao_2017,Grumiller_2020}.

Exploration of asymptotic symmetries near a boundary like a horizon is accompanied by hard effort.
In the first stage, we fix an asymptotic condition of metric components near the boundary.
In the second stage, we solve asymptotic Killing equations for the metric components so that the asymptotic behavior of the metric is preserved under diffeomorphisms generated by vector fields.
In the third stage, we check whether the charges associated with the diffeomorphisms are integrable.
If the charges are not integrable, it is required to repeat the above three stages until an appropriate asymptotic condition is found.
In the fourth stage, if the charges satisfy the integrability condition, we should finally check whether the charges take various values for solutions of the Einstein equations.
At this stage, it often happens that all the charges vanish, implying that all the diffeomorphisms we have selected may be gauge freedoms.
In this case, to find non-vanishing charges, we restart from the first stage.
Although there are several ways to construct a charge in general relativity, such as the Regge-Teitelboim method \cite{REGGE_1974} and the covariant phase method \cite{Lee_Wald_1990,Wald_1993,Iyer_Wald_1994,Iyer_Wald_1995,Wald_Zoupas_2000} developed by Iyer, Lee, Wald and Zoupas, all of them require such efforts in trials and errors. See also Ref.~\cite{Kijowski} for early study related to this method. 

In this paper, we take a shorter route to find non-trivial asymptotic properties and propose an approach without imposing asymptotic behaviors of metrics by hand.
For a given background metric $\bar{g}_{\mu\nu}$ of interest, we consider a set of metrics which are diffeomorphic to it so that purely gravitational properties of asymptotic symmetries can be analyzed.
In this case, some of these symmetries cannot be gauged away.
For example, a diffeomorphism associated with a Lorentz boost is not a gauge freedom since it changes energy and momentum of a black hole.
As a guiding principle to find a non-trivial diffeomorphism, at the first stage, we adopt a condition under which the charges take non-vanishing values for some metrics generated by the diffeomorphism.
Analyzing this condition at the background metric, we can find the candidates for vector fields generating a non-trivial diffeomorphism.
A key advantage of our protocol is the fact that the diffeomorphism generated by these vector fields cannot be gauged away by construction.
As a consequence, it helps us to find a minimal non-trivial diffeomorphism as a building block of asymptotic symmetries.
After identifying the minimal Lie algebra $\mathcal{A}$ spanned by the vector fields and their commutators, we first check the integrability condition at the background metric $\bar{g}$.
Our approach to find a non-trivial diffeomorphism satisfying the integrability condition at the background metric may reduce the difficulties in trials and errors in the conventional approach.
Finally, we check the integrability condition for a set of metrics connected to the background metric by diffeomorphisms generated by the Lie algebra $\mathcal{A}$.
If the integrability condition is satisfied, the charges can be calculated as an integral along a path from the background metric to other metrics.

To demonstrate our approach, we investigate asymptotic symmetries on the Rindler horizon in $(1+3)$-dimensional Rindler spacetime.
We derive a general condition for vector fields that generate diffeomorphisms and along which the variations of the corresponding charges do not vanish at the background metric.
We show that supertranslations and superrotations on the Rindler horizon generate diffeomorphisms which cannot be gauged away, confirming the result in prior research \cite{Hotta_2016}.
Furthermore, we find a new class of non-trivial diffeomorphisms, which we term superdilatation.
This superdilatation includes two classes of diffeomorphisms.
One of them is an extension of dilatation in the direction perpendicular to the horizon.
The other is an extension of dilatation in the time direction.
We explicitly calculate the expression of charges for an example of the superdilatation algebra.

This paper is organized as follows:
In Sec.~\ref{sec:conventional}, we briefly review a conventional approach requiring much effort in trial and error.
In Sec.~\ref{sec:Wald_method}, we briefly review the covariant phase space method, which is adopted in this paper.
In Sec.~\ref{sec:setup}, we explain our approach to construct a building block of asymptotic symmetries.
In Sec.~\ref{sec:asymptotic_sym_in_Rindler}, we find a new symmetry on the Rindler horizon called superdilatation by using our approach.
In Sec.~\ref{sec:summary}, we present the summary of this paper.
In this paper, we set the speed of light to unity: $c=1$.

\section{A conventional approach dependent on luck}
\label{sec:conventional}
A standard approach to explore the asymptotic symmetries requires setting the asymptotic form of metrics near the boundary. The success of exploration severely depends on this metric setting. If an inappropriate metric is chosen, then we completely fail to find the non-trivial symmetries. If we have a deep insight to fix the metric, the non-trivial symmetries appear in the theory. In order to explain this situation, we first make a brief review of the conventional approach with the canonical method \cite{REGGE_1974}.

For example, the authors in Ref.~\cite{brown1986} analyzed asymptotic symmetries in $(1+2)$-dimensional asymptotic anti-de Sitter (AdS) spacetime.
The background metric $\bar{g}_{\mu\nu}$ is given by
\begin{align}
    \left(
    \begin{array}{ccc}
        \bar{g}_{tt}     & \bar{g}_{tr}     & \bar{g}_{t\phi}    \\
        \bar{g}_{rt}     & \bar{g}_{rr}     & \bar{g}_{r\phi}    \\
        \bar{g}_{\phi t} & \bar{g}_{\phi r} & \bar{g}_{\phi\phi}
    \end{array}
    \right) & = \left(
    \begin{array}{ccc}
        -\left(\frac{r^{2}}{l^{2}} + 1 \right) & 0                                          & 0     \\
        0                                      & \left(\frac{r^{2}}{l^{2}} + 1 \right)^{-1} & 0     \\
        0                                      & 0                                          & r^{2}
    \end{array}
    \right),
\end{align}
where $l = (-1/\Lambda)^{1/2}$.
It describes the exact AdS metric which is a solution of the Einstein equations with negative cosmological constant $\Lambda$.
The exact AdS spacetime has six Killing vectors, thus the goal of exploration of the asymptotic symmetries is to get at least six asymptotic Killing vectors. The AdS boundary is located at $r = \infty$.
Near the AdS boundary, we set the asymptotic form of the metric as
\begin{align}
    g_{\mu\nu} & = \bar{g}_{\mu\nu} + \delta g_{\mu\nu}    \label{set_of_metric_with_asy_behav_1}.
\end{align}
Let us consider two forms of the metric. One of them is the following ansatz:
\begin{align}
    \left(\delta g_{\mu\nu}\right) & = \left(
    \begin{array}{ccc}
        0                                   & 0 & A\left(\frac{r^{2}}{l^{2}}+1\right)     \\
        0                                   & 0 & 0                                       \\
        A\left(\frac{r^{2}}{l^{2}}+1\right) & 0 & A^{2}\left(\frac{r^{2}}{l^{2}}+1\right)
    \end{array}\label{eq_asy_1}
    \right),\quad  (|A| <|l|).
\end{align}
It can be shown that the vector field preserving the above metric is given by a linear combination of $\partial_{t}$ and $\partial_{\phi}$, which is denoted by $\xi$. Thus, in this case, we have only two asymptotic Killing vectors. 
The variation of associated charges $J[\xi]$ is
\begin{align}
    \delta J[\xi] = 4\pi\xi^{\phi}\delta A.
\end{align}
The charges are integrable and calculated as
\begin{align}
    J[\partial_{t}]    & =0,       \\
    J[\partial_{\phi}] & = 4\pi A,
\end{align}
where the integral constants are chosen such that $J[\xi]=0$ at the AdS spacetime.
In order to get more than one non-vanishing charge, we should replace Eq.~\eqref{eq_asy_1} with another form.
A successful one is the following:
\begin{align}
    \left(\delta g_{\mu\nu}\right) & = \left(
    \begin{array}{ccc}
            {\mathcal O}(1)      & {\mathcal O}(r^{-3}) & {\mathcal O}(1)      \\
            {\mathcal O}(r^{-3}) & {\mathcal O}(r^{-4}) & {\mathcal O}(r^{-3}) \\
            {\mathcal O}(1)      & {\mathcal O}(r^{-3}) & {\mathcal O}(1)
        \end{array}
    \right).
    \label{eq_asy_2}
\end{align}
The solution of the asymptotic Killing equation is given by
\begin{align}
    \xi =\left(
    \begin{array}{c}
        \xi^{t} \\
        \xi^{r} \\
        \xi^{\phi}
    \end{array}
    \right)=\left(
    \begin{array}{c}
        lT(t,\phi)+\frac{l^{3}}{r^{2}}\overline{T}(t,\phi) + {\mathcal O}(r^{-4}) \\
        rR(t,\phi) + {\mathcal O}(r^{-1})                                         \\
        \Phi(t,\phi) + \frac{l^{2}}{r^{2}}\overline{\Phi}(t,\phi) + {\mathcal O}(r^{-4})
    \end{array}
    \right),
\end{align}
where the functions $T(t,\phi),\overline{T}(t,\phi),R(t,\phi),\Phi(t,\phi)$ and $\overline{\Phi}(t,\phi)$ satisfy
\begin{align}
    l\partial_{t}T(t,\phi) & = \partial_{\phi}\Phi(t,\phi) = -R(t,\phi),\ \partial_{\phi}T(t,\phi) = l\partial_{t}\Phi(t,\phi),   \\
    \overline{T}(t,\phi)   & = -\frac{l}{2}\partial_{t}R(t,\phi),\ \overline{\Phi}(t,\phi) = \frac{1}{2}\partial_{\phi}R(t,\phi).
\end{align}
It is shown that the charges associated with the vector fields are integrable. Surprisingly, the algebra of the charges is a direct sum of two Virasoro algebras which are infinite dimensional Lie algebras in contrast to the first case.
Unfortunately, however, there is no systematic way to find such a successful asymptotic form in Eq.~\eqref{eq_asy_2}.

The conventional approach is shown schematically in FIG.~\ref{flowchart_1}.
In the first step, we determine an asymptotic form of the metric near the boundary.
In the second step, we solve asymptotic Killing equations for the metric components so that the asymptotic form of the metric is preserved under diffeomorphisms generated by vector fields.
In the third step, we check whether the charges associated with the diffeomorphisms are integrable.
If the charges are not integrable, we have to repeat the above three steps until we successfully find an appropriate asymptotic condition.
In the fourth step, if the charges are integrable, we check whether they take various values for solutions of the Einstein equations.
If they do, we obtain non-trivial charges. However, if not, we have to restart from the first step since all the diffeomorphisms generated by the vector fields we have selected are gauge freedom.
Such a failure often happens in the conventional approach.
As we have seen, we have to determine the asymptotic form of metric by trials and errors. It usually takes much efforts and might turn out not to serve the purpose in the end.

So far, we gave a review of conventional approach with the canonical method.
The same approach has been taken in studies of asymptotic symmetries using the covariant phase space method.
In other words, the flow chart in Fig.~\ref{flowchart_1} is often adopted in the covariant phase space method, e.g., in Refs.~\cite{Hollands_2005,Ishibashi_2005}.
In the next section, we introduce the covariant phase space method which we adopt in this paper. In Sec.~\ref{sec:setup}, we explain our approach which may reduce the above efforts.
\newpage
\begin{figure}[H]
    \includegraphics[width=15cm,keepaspectratio]{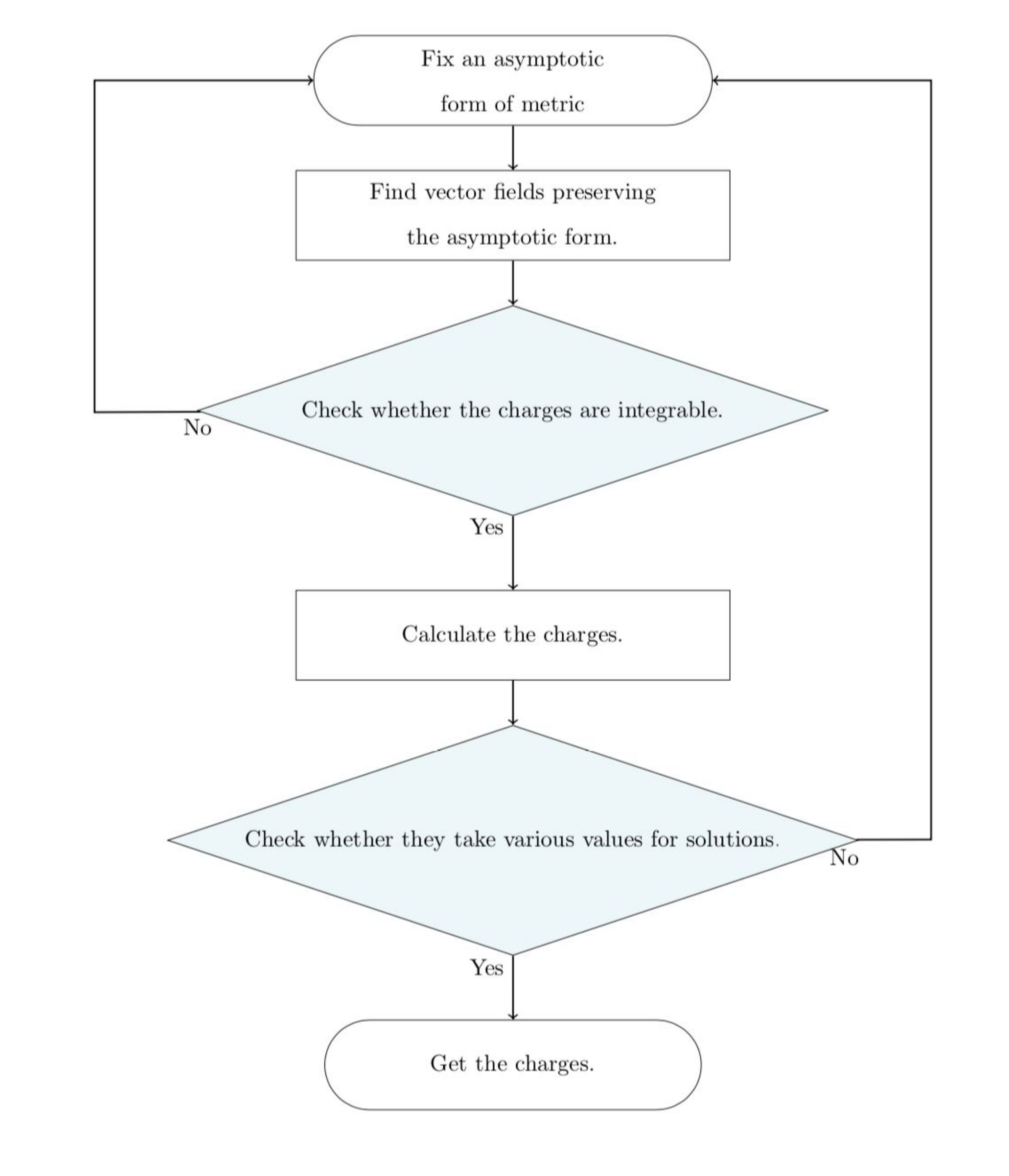}
    \caption{A flow chart of the conventional approach.}
    \label{flowchart_1}
\end{figure}
\newpage
\section{A brief review on a covariant phase space method}\label{sec:Wald_method}
In Sec.~\ref{sec:conventional}, we briefly reviewed the canonical method to explore the asymptotic symmetries.
In this paper, we will use the covariant phase space method \cite{Lee_Wald_1990,Wald_1993,Iyer_Wald_1994,Iyer_Wald_1995,Wald_Zoupas_2000}. An advantage of the method is covariant calculation independent of local coordinates without using the Arnowitt-Deser-Misner (ADM) decomposition \cite{ADM_1959}. Here let us briefly review the covariant phase space method to calculate the charge corresponding to a diffeomorphism. Although the covariant phase space method can be applied to all diffeomorphism invariant theories, we focus on the Einstein gravity.

Consider the Einstein-Hilbert action
\begin{align}
    S & = \int_{\mathcal M}\dd^{d}x {\mathcal L}_{EH},
\end{align}
where the Lagrangian density is given by ${\mathcal L}_{EH} \coloneqq \frac{1}{16\pi G}\sqrt{-g}R$,  $\int_\mathcal{M}d^dx$ denotes the integral over a $d$-dimensional spacetime $\mathcal{M}$, $g$ and $R$ are the determinant of the metric $g_{\mu\nu}$ and the Ricci scalar, respectively.
The variation of ${\mathcal L}_{EH}$ is given by
\begin{align}
    \delta {\mathcal L}_{EH} & = -\frac{\sqrt{-g}}{16\pi G}G^{\mu\nu}\delta g_{\mu\nu} + \partial_{\mu}\Theta^{\mu}(g,\delta g),
\end{align}
where $G_{\mu\nu}$ is the Einstein tensor and $\Theta$ is the pre-symplectic potential defined by
\begin{align}
    \Theta^{\mu}(g, \delta g) & = \frac{\sqrt{-g}}{16\pi G}\left(g^{\mu\alpha}\nabla^{\beta}\delta g_{\alpha\beta} - g^{\alpha\beta}\nabla^{\mu}\delta g_{\alpha\beta}\right).
    \label{eq:presymplectic_potential}
\end{align}
In the following, for notational symplicity, the metric $g_{\mu\nu}$ is abbreviated as $g$ in the arguments of functions.

The Einstein-Hilbert action is invariant under the Lie derivative along an arbitrary vector field $\xi$ up to a total derivative term. Therefore, for an infinitesimal transformation of the metric $\delta_\xi g_{\mu\nu}=\pounds_\xi g_{\mu\nu}$ where $\pounds_{\xi}$ represents the Lie derivative with respect to $\xi$, the corresponding Noether current is given by
\begin{align}
    J^{\mu}[\xi] :=\Theta^{\mu}(g, \pounds_{\xi}g) -\xi^{\mu}{\mathcal L}_{EH},
\end{align}
which satisfies
\begin{align}
    \partial_\mu J^{\mu}[\xi]=\frac{\sqrt{-g}}{16\pi G} G^{\mu\nu}\pounds_{\xi}g_{\mu\nu}.
\end{align}
For a solution $g_{\mu\nu}$ of the Einstein equations, the current is conserved:
\begin{align}
    \partial_\mu J^\mu[\xi]\approx 0,
\end{align}
where $\approx$ means that the equality holds for any solution of the equation of motion, i.e., the Einstein equations.
By using the Poincar\'e lemma, there exists a 2-form $Q^{\mu\nu}[\xi]$ of the spacetime  satisfying
\begin{align}
    \label{Noether}
    J^\mu[\xi]\approx \partial_{\nu}Q^{\mu\nu}[\xi].
\end{align}
More generally, as shown in the Appndix of \cite{Iyer_Wald_1995}, we have
\begin{align}
J^{\mu}[\xi] = \partial_{\nu}Q^{\mu\nu}[\xi] + \mathcal{C}\indices{^{\mu}_{\nu}}\xi^{\nu},
\end{align}
where $\mathcal{C}\indices{^{\mu}_{\nu}}$ is a constraint satisfying $\mathcal{C}\indices{^{\mu}_{\nu}} \approx 0$.  
In the case of Einstein gravity, the 2-form is given by
\begin{align}
    Q^{\mu\nu}[\xi] = -\frac{\sqrt{-g}}{8\pi G}\nabla^{[\mu}\xi^{\nu]}
    \label{eq:Komar},
\end{align}
while $C\indices{^\mu_\nu}$ is given by
\begin{align}
	\mathcal{C}\indices{^{\mu}_{\nu}} = \frac{\sqrt{-g}}{8\pi G}G\indices{^{\mu}_{\nu}},
\end{align}
where the bracket $[\ \ ]$ for indices is an anti-symmetric symbol defined as
\begin{align}
	A_{[\mu_{1} \cdots \mu_{d}]} \coloneqq \frac{1}{d!}\sum_{\sigma \in S_{d}}(-1)^{\sigma}A_{\mu_{\sigma(1)}\cdots\mu_{\sigma(d)}},
\end{align}
where $S_{d}$ is a permutation group.
The corresponding Noether charge of $\xi$ is given by
\begin{align}
    Q[\xi] & \coloneqq \int_{\Sigma}(\dd^{d-1}x)_{\mu}J^{\mu}[\xi] \nonumber                        \\
             & \approx \int_{\Sigma}(\dd^{d-1}x)_{\mu}\partial_{\nu}Q^{\mu\nu}[\xi] \nonumber \\
             & =\oint_{\partial\Sigma}(\dd^{d-2}x)_{\mu\nu}Q^{\mu\nu}[\xi],
    \label{Q_xi}
\end{align}
where $\Sigma$ is a $(d-1)$-dimensional submanifold embedded in ${\mathcal M}$, $\partial\Sigma$ is the boundary of $\Sigma$ and the integral measure is defined as
\begin{align}
    (\dd^{d-p}x)_{\mu_{1}\dots\mu_{p}} \coloneqq \frac{\epsilon_{\mu_{1}\dots\mu_{p}\mu_{p+1}\dots\mu_{d}}}{d!(d-p)!}\dd x^{\mu_{p+1}}\wedge\dots\wedge\dd x^{\mu_{d}}.
    \label{int_measure}
\end{align}
In Eq.~\eqref{int_measure}, $\epsilon_{{\mu_{1}}\cdots\mu_{d}}$ is the $d$-dimensional Levi-Civita symbol defined as
\begin{align}
\epsilon_{\mu_{1}\cdots\mu_{d}} &= \epsilon_{[\mu_{1}\cdots\mu_{d}]} \\
\epsilon_{1\cdots d} &=1.
\end{align}
In the third line in Eq.~\eqref{Q_xi}, we have used Stokes' theorem.

Let $\delta_1 g$ and $\delta_2 g$ be arbitrary linearized perturbations of metric $g$ in question. Let $\delta_i f(g)$ denote the variation of a function $f(g)$ with respect to each perturbation $\delta_i g$. 
With these notations, the pre-symplectic current is defined by 
\begin{align}
    \omega^{\mu}(g, \delta_{1}g, \delta_{2}g) := \delta_{1}\Theta^{\mu}(g, \delta_{2}g) - \delta_{2}\Theta^{\mu}(g,\delta_{1}g).
\end{align}
We further define the pre-symplectic form $\Omega(g, 
\delta_{1}g, \delta_{2}g)$ as
\begin{align}
    \Omega(g, \delta_{1}g, \delta_{2}g) := \int_{\Sigma}(\dd^{d-1}x)_{\mu}\omega^{\mu}(g,\delta_{1}g, \delta_{2}g),
\end{align}
which is a 2-form on the field configuration space. 

Let $H[\xi]$ denote the charge which generates an infinitesimal transformation along a vector field $\xi$. The variation of the charge with respect to an arbitrary perturbation $\delta g$ is given by \cite{Lee_Wald_1990,Wald_1993,Iyer_Wald_1994,Iyer_Wald_1995,Wald_Zoupas_2000}
\begin{align}
    \delta H[\xi] & = \Omega(g, \delta g, \pounds_{\xi}g) =\int_{\Sigma}(\dd^{d-1}x)_{\mu}\omega^{\mu}(g,\delta g, \pounds_{\xi}g).
\end{align}
The variation of the Noether current can be recast into
\begin{align}
    \delta J^{\mu}[\xi] \approx \omega^{\mu}(g, \delta g, \pounds_{\xi}g) -\partial_{\nu}[2\xi^{[\mu}\Theta^{\nu]}(g,\delta g)]\label{eq_omega},
\end{align}
where $g_{\mu\nu}$ is assumed to be the solution of the Einstein equations, while $\delta g_{\mu\nu}$ does not necessarily satisfy the linearized Einstein equations. Equation~\eqref{eq_omega} can be rewritten as
\begin{align}
    \omega^\mu(g,\delta g, \pounds_\xi g) \approx \delta\mathcal{C}\indices{^{\mu}_{\nu}}\xi^{\nu}+\partial_\nu S^{\mu\nu}\left(g,\delta g,\pounds_\xi g\right),
\end{align}
where we have defined
\begin{align}
    S^{\mu\nu}\left(g,\delta g,\pounds_\xi g\right) & \coloneqq \delta Q^{\mu\nu}[\xi]+2\xi^{[\mu}\Theta^{\nu]}(g,\delta g)\nonumber \\
                                                    & = \frac{\sqrt{-g}}{8\pi G}\left(
    -\frac{1}{2}\delta g^{\alpha}_{\ \alpha}\nabla^{[\mu}\xi^{\nu]} + \delta g^{\alpha[\mu}\nabla_{\alpha}\xi^{\nu]} - \nabla^{[\mu}\delta g^{\nu]\alpha}\xi_{\alpha} + \xi^{[\mu}\nabla_{\alpha}\delta g^{\nu]\alpha} - \xi^{[\mu}\nabla^{\nu]}\delta g^{\alpha}_{\ \alpha}\right).
    \label{eq_definition_S}
\end{align}
Thus, if  $H[\xi]$ exists, it satisfies
\begin{align}
    \delta H[\xi] \approx \int_{\Sigma}(\dd^{d-1}x)_{\mu}\delta \mathcal{C}\indices{^{\mu}_{\nu}}\xi^{\nu} + \oint_{\partial\Sigma}(\dd^{d-2}x)_{\mu\nu}S^{\mu\nu}(g,\delta g, \pounds_{\xi}g).
    \label{eq:H_for_on_shell}
\end{align}
When $\delta g_{\mu\nu}$ is a solution of the linearized Einstein equation, $\delta \mathcal{C}\indices{^\mu_\nu}=0$. Therefore, we get
\begin{align}
    \delta H[\xi] \approx  \oint_{\partial\Sigma}(\dd^{d-2}x)_{\mu\nu}S^{\mu\nu}(g,\delta g, \pounds_{\xi}g).
\end{align}
Since $H[\xi]$ does not always exist, we need to impose an additional condition for $g_{\mu\nu}$ and $\xi$ to obtain the charge $H[\xi]$, which is referred to as the integrability condition.
In Ref.~\cite{Wald_Zoupas_2000}, the integrability condition is introduced. 

As a simplest example, let us first consider whether the charges are integrable for a set of solutions of the Einstein equation $g_{\mu\nu}(\lambda_1,\lambda_2)$, which is smoothly parameterized by two real parameters $\lambda_1$ and $\lambda_2$. 
A linearized perturbation $\delta_i g_{\mu\nu}(\lambda_1,\lambda_2) $ is defined by $\delta_i g_{\mu\nu}(\lambda_1,\lambda_2)\coloneqq \frac{\partial }{\partial \lambda_i}g_{\mu\nu}(\lambda_1,\lambda_2)$. 
If $H[\xi]$ exists, due to the equality of mixed partial derivatives, we have
\begin{align}
    0&=\left(\frac{\partial}{\partial \lambda_{1}}\frac{\partial}{\partial \lambda_{2}}-\frac{\partial}{\partial \lambda_{2}}\frac{\partial}{\partial \lambda_{1}}\right)H[\xi]\bigg|_{g(\lambda_1,\lambda_2)}\nonumber\\
    &=(\delta_{1}\delta_{2}-\delta_{2}\delta_{1})H[\xi]|_{g(\lambda_1,\lambda_2)}.
\end{align}
As long as there is no topological obstruction, this is a necessary and sufficient condition for the charge $H[\xi]$ to exist.

Similarly, for a general set of solutions $g_{\mu\nu}$ of the Einstein equations, for $H[\xi]$ to exist, it must holds
\begin{align}
    \label{condition1}
    0 & = (\delta_{1}\delta_{2}-\delta_{2}\delta_{1})H[\xi]\nonumber                                                                                                           \\
      & = -\int_{\partial \Sigma}(\dd^{d-2}x)_{\mu\nu}\left(\xi^{[\mu}\delta_{1}\Theta^{\nu]}(g,\delta_{2}g)-\xi^{[\mu}\delta_{2}\Theta^{\nu]}(g,\delta_{1}g)\right) \nonumber \\
      & =-\int_{\partial\Sigma}(\dd^{d-2}x)_{\mu\nu}\xi^{[\mu}\omega^{\nu]}(g,\delta_{1}g,\delta_{2}g)\nonumber                                                                \\
      & \approx-\int_{\partial\Sigma}(\dd^{d-2}x)_{\mu\nu}\xi^{[\mu}\partial_\alpha S^{\nu]\alpha}(g,\delta_{1}g,\delta_{2}g)
\end{align}
for arbitrary linearized perturbations $\delta_1 g$ and $\delta_{2}g$ of the metric in question, 
where we have used Eq.~\eqref{eq:H_for_on_shell}.
This is a necessary condition for $H[\xi]$ to exist.
It is also a sufficient condition if the space of $g_{\mu\nu}$ does not have any topological obstruction \cite{Wald_Zoupas_2000}.
Shifting the charge by a constant, it is always possible to make the charges vanish at a reference metric $g^{(0)}_{\mu\nu}$.
By using a smooth one-parameter set of solutions $g_{\mu\nu}(\lambda)$ such that $g_{\mu\nu}(0)=g^{(0)}_{\mu\nu}$ and $g_{\mu\nu}(1)=g_{\mu\nu}$, the charge is given by
\begin{align}
    H[\xi] = \int_{0}^{1}\dd\lambda\int_{\partial \Sigma}(\dd^{d-2}x)_{\mu\nu}(\partial_{\lambda} Q^{\mu\nu}[\xi](g,\partial_{\lambda}g) + 2\xi^{[\mu}\Theta^{\nu]}(g,\partial_{\lambda} g)).\label{eq_charge_int_along_path}
\end{align}
Note that the charge defined in Eq.~\eqref{eq_charge_int_along_path} is independent of the choice of the path $g_{\mu\nu}(\lambda)$ as long as Eq.~\eqref{condition1} is satisfied. 

In the later sections, we adopt this method.

\section{Our approach}
\label{sec:setup}
In this section, we explain our approach to explore the asymptotic symmetries.
A guiding principle is proposed to determine $\delta g_{\mu\nu}$. The choice of $\delta g_{\mu\nu}$ ensures us to obtain the non-trivial charges of the asymptotic symmetries as long as the integrability of the charges is satisfied. As a consequence, we can get the diffeomorphisms which cannot be gauged away.

We consider a Lie algebra $\mathcal{A}$ of vector fields, and a set of metrics connected to the fixed background metric $\bar{g}_{\mu\nu}$, which is a solution of the Einstein equations, by all the diffeomorphism generated by $\mathcal{A}$.
For an arbitrary variation $\delta$ and an arbitrary element $g_{\mu\nu}$ of this set, there exists a vector field $\chi$ in the algebra such that
\begin{align}
    \label{set2}
    \delta g_{\mu\nu} = \pounds_{\chi} g_{\mu\nu}.
\end{align}
With this set of metrics, we can analyze the properties of the background metric $\bar{g}_{\mu\nu}$ since all the metrics are diffeomorphic to it.
It should be noted that as opposed to the conventional approach, we do not need to check whether the variation of the metric satisfies the linearized Einstein equations since the Einstein equations are invariant under diffeomorphisms.
Hereafter, $g_{\mu\nu}$ denotes a metric connected to $\bar{g}_{\mu\nu}$ via a diffeomorphism generated by the Lie algebra $\mathcal{A}$.
A schematic picture of the set of metric is shown in FIG.~\ref{fig:configuration_space}.
\begin{figure}[htbp]
    \centering
    \includegraphics[width=7.5cm]{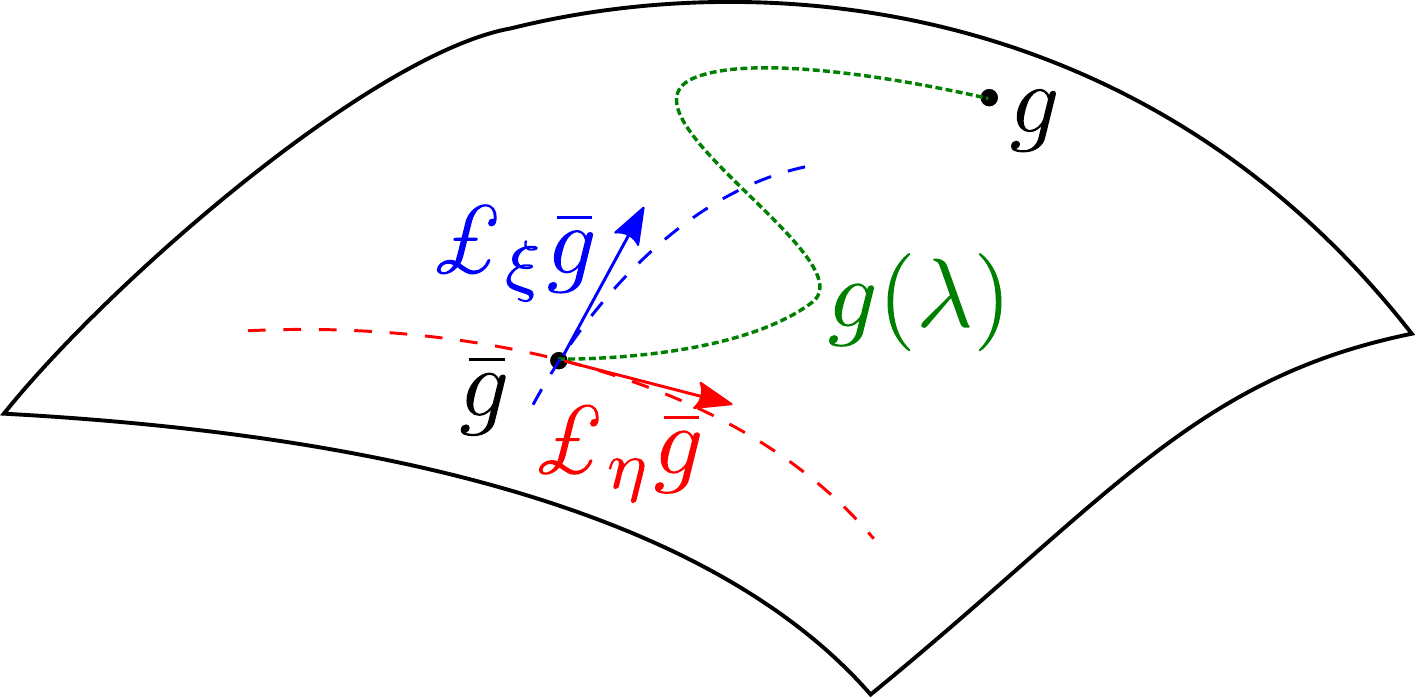}
    \caption{A schematic picture of the set of metrics we analyze in this paper. The vector fields $\xi$ and $\eta$ are elements of a Lie algebra $\mathcal{A}$. Metrics are connected to the background metric $\bar{g}_{\mu\nu}$ by diffeomorphisms generated by $\mathcal{A}$. For any metric $g_{\mu\nu}$, there exists a smooth path $g_{\mu\nu}(\lambda)$ from $\bar{g}_{\mu\nu}$ to $g_{\mu\nu}$.}
    \label{fig:configuration_space}
\end{figure}

Here, we will provide a guiding principle to find a Lie algebra $\mathcal{A}$ as a building block of the symmetries.
In most cases, even if the charges are integrable, the diffeomorphisms generated by the Lie algebra correspond to a gauge freedom.
For example, consider an algebra formed by vector fields with support in a finite spatial region in $\Sigma$ far away from the boundary $\partial\Sigma$.
In this case, although the charges are trivially integrable, all Poisson brackets of charges vanish, implying that the diffeomorphisms generated by the algebra is a gauge freedom since metrics connected by them are physically indistinguishable.
As we have already mentioned in Sec.~\ref{sec:conventional}, in the conventional approach, such a failure often happens.
In order to find a non-trivial algebra of charges, it is required that
\begin{align}
    \label{delH}
    \delta_{\eta}H[\xi] \neq 0,
\end{align}
or equivalently,  $\{H[\xi],H[\eta]\}\neq 0$ holds for some vector fields $\eta,\xi$ in the algebra.
Here, $\delta_\eta$ denotes a variation of metric such that $\delta_\eta g_{\mu\nu}=\pounds_\eta g_{\mu\nu}$.
From Eq.~\eqref{eq_definition_S}, the left hand side of Eq.~\eqref{delH} can be recast into
\begin{align}
    \label{non_triviality}
    \int _{\partial\Sigma}(\dd^{d-2}x)_{\mu\nu}S^{\mu\nu}(g,\pounds_\eta g,\pounds_\xi g)                                                                                   & \neq 0\nn                                                                 \\
    \iff \int_{\partial\Sigma}(\dd x^{d-2})_{\mu\nu}\sqrt{-g}\left[(2\nabla^{\alpha}\eta^{\mu}\nabla_{\alpha}\xi^{\nu}-\nabla_{\alpha}\eta^{\alpha}\nabla^{\mu}\xi^{\nu}\right. & \left.+\nabla_{\alpha}\xi^{\alpha}\nabla^{\mu}\eta^{\nu}) \nn\right.             \\
                                                                                                                                                                            & \left.-C_{\alpha\beta}^{\ \ \ \mu\nu}\xi^{\alpha}\eta^{\beta}\right] \neq 0,
\end{align}
where $C_{\alpha\beta\mu\nu}$ is the Weyl tensor. 

The differomorphism associated with the algebra cannot be gauged away as long as there exist $\eta,\xi$ and $g_{\mu\nu}$ satisfying Eq.~\eqref{non_triviality}.
In particular, here we adopt the following sufficient condition for Eq.~\eqref{non_triviality}:
\begin{align}
    \int _{\partial \Sigma}(\dd^{d-2}x)_{\mu\nu}S^{\mu\nu}(\bar{g},\pounds_{\eta} \bar{g},\pounds_{\xi} \bar{g}) \neq 0\label{eq_non-triviality_background}
\end{align}
as the guiding principle.
Of course, the integrability condition in Eq.~\eqref{condition1} must be satisfied. It can be recast into
\begin{align}
    0=\int_{\partial \Sigma}\left(\dd^{d-2}x\right)_{\mu\nu}\xi^{[\mu}\partial_\alpha S^{\nu]\alpha}\left(g,\pounds_\eta g,\pounds_\chi g\right), \quad \forall \xi,\eta,\chi \in \mathcal{A}\label{eq_integrability}
\end{align}
where we have used Eq.~\eqref{set2}.

For a given background metric $\bar{g}_{\mu\nu}$, it takes much efforts to find an appropriate Lie algebra $\mathcal{A}$ so that Eqs.~\eqref{eq_integrability} and \eqref{non_triviality} hold.
It corresponds to the difficulties to find an appropriate asymptotic behavior of $\delta g_{\mu\nu}$ by trials and errors in the conventional approach.
We propose the following six steps as a practical and useful way to overcome these difficulties:
\begin{enumerate}[Step~1]
    \item Fix a background metric $\bar{g}_{\mu\nu}$ of interest.
    \item
          For a fixed background metric $\bar{g}_{\mu\nu}$ of interest, find two vector fields $V_1$ and $V_2$ satisfying Eq.~\eqref{eq_non-triviality_background}.
          These are the candidates generating non-trivial diffeomorphisms whose charges are integrable.
    \item
          Introduce the minimal Lie algebra $\mathcal{A}$ including $V_1,V_2$ by calculating their commutators.
          Check whether the integrability condition at the background metric, i.e.,
          \begin{align}
              \int_{\partial \Sigma}\left(\dd^{d-2}x\right)_{\mu\nu}\xi^{[\mu}\partial_\alpha S^{\nu]\alpha}\left(\bar{g},\pounds_\eta \bar{g},\pounds_\chi \bar{g}\right)=0, \quad\forall \xi,\eta,\chi \in \mathcal{A}\label{eq_integrability_background}
          \end{align}
          is satisfied for the algebra $\mathcal{A}$ as a necessary condition for Eq.~\eqref{eq_integrability}.
          If it holds, go to the next step. Otherwise, go back to Step~2.
    \item Construct a set of metrics $g_{\mu\nu}$ which are connected to the background metric $\bar{g}_{\mu\nu}$ via differomorphisms generated by $\mathcal{A}$.
    \item Check the integrability condition in Eq.~\eqref{condition1}.
          If it is satisfied, then go to the following step.
          If not, go back to Step~2.
    \item Calculate the charges by using Eq.~\eqref{eq_charge_int_along_path}. Here, we fix the reference metric as the background metric: $g^{(0)}_{\mu\nu}=\bar{g}_{\mu\nu}$.
\end{enumerate}
An advantage of the above algorithmic protocol is the fact that Steps~2 and 3 can be done by using only the background metric $\bar{g}_{\mu\nu}$.
In particular, it should be noted that no trials and errors are required to calculate the left hand side of Eq.~\eqref{eq_non-triviality_background}. 
Furthermore, the diffeomorphism generated by $\mathcal{A}$ cannot be gauged away since the corresponding charge algebra has non-vanishing Poisson bracket by construction.
This may significantly reduce the efforts involved in finding an appropriate algebra and asymptotic behaviors of the metric in the conventional approach.
In other words, Eq.~\eqref{eq_non-triviality_background} is the guiding principle to find a non-trivial charge algebra.
Such a guiding principle does not exist in the conventional approach.
A flow chart of our approach is shown in Fig.~\ref{flowchart}.
\begin{figure}[H]
    \centering
    \includegraphics[width=18.5cm, keepaspectratio]{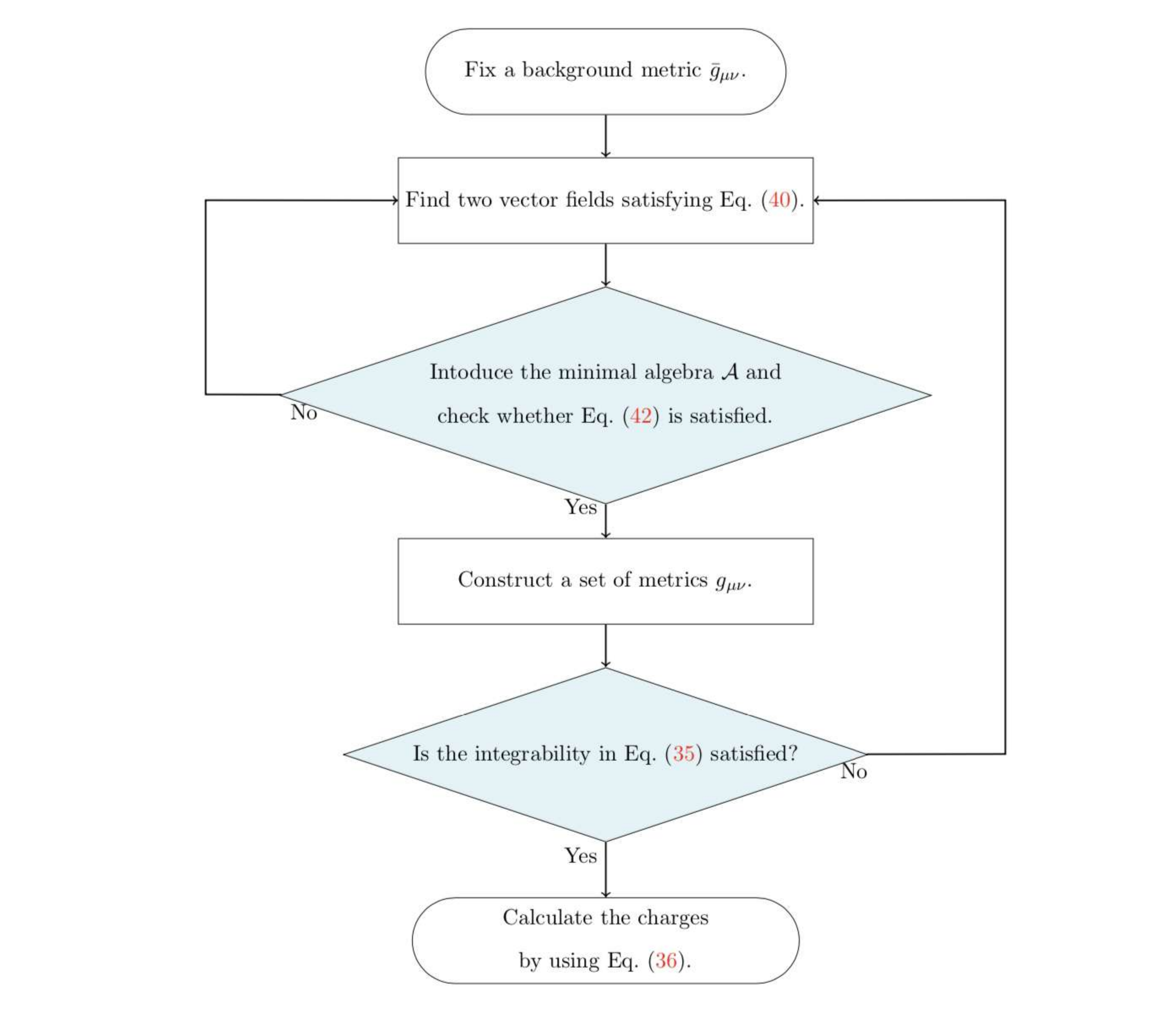}
    \caption{Flow chart of our approach.}
    \label{flowchart}
\end{figure}
Indeed our approach is quite powerful.
In the following section, we will apply our approach to the Rinlder spacetime as a demonstration.
We successfully find a new class of asymptotic symmetries on the Rindler horizon.

\section{Asymptotic symmetries on Rindler horizon}\label{sec:asymptotic_sym_in_Rindler}
In this section, we demonstrate our approach in the case where the background metric is $(1+3)$-dimensional Rindler spacetime.
In particular, we will investigate asymptotic symmetries on the Rindler horizon.

\vspace{\baselineskip}
\uline{Step1  : Fix a background metric $\bar{g}_{\mu\nu}$.} \\
Here, the background metric is fixed to be the Rindler metric given by
\begin{align}
    \dd\bar{s}^{2} = -\kappa^{2}\rho^{2}\dd\tau^{2} + \dd\rho^{2} + \dd y^{2}+\dd z^{2},
\end{align}
where $-\infty<\tau<\infty$, $0<\rho<\infty$, $-\infty<y<\infty$, $-\infty<z<\infty$ and $\kappa > 0$ is a constant.
The Rindler horizon is located at $\rho = 0$.

\vspace{\baselineskip}
\uline{Step 2 : Select two vector fields $V_1$ and $V_2$ satisfying Eq.~\eqref{eq_non-triviality_background}.}\\
Since we are interested in asymptotic symmetries in Rindler spacetime, we will analyze diffeomorphisms which map a point in the Rindler spacetime into itself. Let $\xi$ be the Lie algebra of such a diffeomorphism.
Through an infinitesimal diffeomorphism generated by $\xi$, a point $x$ of the spacetime is mapped into
\begin{align}
    x^\mu\mapsto x^\mu +\epsilon\xi^\mu +\mathcal{O}(\epsilon^2)\quad (\epsilon \to 0).
\end{align}
Since the Rindler horizon is located at $\rho=0$ in our coordinate system, unless the $\rho$-component of the vector field $\xi$ vanishes in the limit $\rho\to 0$, a point inside (resp. outside) the Rindler horizon can be mapped to the outside (resp. inside). Therefore, we require that the vector field $\xi$
has the following asymptotic behavior
\begin{align}
    \xi^\tau=\mathcal{O}(1),\quad \xi^\rho=\mathcal{O}(\rho),\quad\xi^y=\mathcal{O}(1),\quad \xi^z=\mathcal{O}(1)\quad (\rho \to 0 )
\end{align}
near the Rindler horizon.
We assume that the vector fields have supports in a finite region near the Rindler horizon.
In general, the components of the vector fields $V_{1}$ and $V_{2}$ can be written for $\rho \to 0$ as
\begin{align}
    V_{1} & =(X^{\tau}(\tau,y,z)+{\mathcal O}(\rho),X^{\rho}(\tau,y,x)\rho+{\mathcal O}(\rho^{2}),X^{A}(\tau,y,z)+{\mathcal O}(\rho)),\nonumber \\ V_{2}&=(Y^{\tau}(\tau,y,z)+{\mathcal O}(\rho),Y^{\rho}(\tau,y,z)\rho+{\mathcal O}(\rho^{2}),Y^{A}(\tau,y,z)+{\mathcal O}(\rho)),
\end{align}
where $A$ runs over $y$ and $z$.
Equation~\eqref{eq_non-triviality_background} is evaluated as
\begin{align}
     & \int_{\partial\Sigma}(\dd^{d-2}x)_{\mu\nu}S^{\mu\nu}(\bar{g},\pounds_{V_2}\bar{g},\pounds_{V_1}\bar{g})\nonumber \\
     & =\frac{1}{8\pi G\kappa}\int_{\mathbb R^{2}}	 \left[(2\kappa^{2}Y^{\tau} + \partial_{\tau}Y^{\rho})\partial_{A}X^{A} + \partial_{\tau}X^{\rho}\partial_{\tau}Y^{t} - (X \leftrightarrow Y)\right] \dd y\dd z
    \label{eq_non-triviality_Rindler},
\end{align}
where we took the limit $\rho\to 0$ in the second line since the Rindler horizon is located at $\rho= 0$. 
From this formula, we can identify several candidates for vector fields which yield a non-trivial charge algebra.

As a known example, consider the case where $X^\rho=Y^\rho=0$.
If $Y^\tau$ and $\partial_AX^A$ do not vanish, the corresponding Poisson bracket does not vanish. In this case, the vector fields $V_1$ and $V_2$ correspond to  a special class of diffeormorhisms called superrotation and supertranslation, respectively, which are shown to be integrable on the Rindler horizon in Ref.~\cite{Hotta_2016}. See Appendix~\ref{sec:app_integrability} for detailed calculations of the charges. 

Another interesting candidate, which we will investigate in detail here, is the case where $X^\rho=X^A=0$ and $Y^\tau=Y^A=0$.
If $\int \dd y \dd z\partial_\tau X^\tau \partial_\tau Y^\rho\neq 0$, the Poisson bracket does not vanish.
The vector field $V_1=(X^\tau + \mathcal{O}(\rho),0,0,0)$ generates a class of dilatation transformation in time direction since $\partial_\tau X^\tau\neq 0$ must hold.
On the other hand, the vector field $V_2=(0,\rho Y^\rho+\mathcal{O}(\rho^{2}),0,0)$
generates a dilatation in $\rho$ direction.
We term these two transformations superdilatations since the generators depend on the position in spacetime in general.

As a particular case, we will analyze the charges corresponding to two vector fields as $\rho \to 0$
\begin{align}
    V_1&=(\tau T_{1}(y,z)+\mathcal{O}(\rho^{2}),\mathcal{O}(\rho^{2}),\mathcal{O}(\rho^{2}),\mathcal{O}(\rho^{2})), \nn\\
    V_2 &=(\mathcal{O}(\rho^{2}), \tau\rho T_{2}(y,z)+\mathcal{O}(\rho^2),\mathcal{O}(\rho^{2}),\mathcal{O}(\rho^{2}))
\label{eq_vector_fields_st_sd},
\end{align}
where $T_{1}$ and $T_{2}$ are arbitrary functions of $y,z$.

\vspace{\baselineskip}
\uline{Step 3: Construct the Lie algebra including $V_1$ and $V_2$ and check the integrability at the background metric $\bar{g}_{\mu\nu}$.}\\
Since the vector fields in Eq.~\eqref{eq_vector_fields_st_sd} satisfy
\begin{align}
    [V_1,V_2]=V_3,
\end{align}
where
\begin{align}
    V_3 =(\mathcal{O}(\rho^{2}), \tau\rho T_{3}(y,z)+\mathcal{O}(\rho^2),\mathcal{O}(\rho^{2}),\mathcal{O}(\rho^{2})),\quad T_3(y,z)\coloneqq T_1(y,z)T_2(y,z),
\end{align}
the algebra $\mathcal{A}$ defined by
\begin{align}
    &\mathcal{A}\nonumber \\
    &\coloneqq \left\{V=(\tau T_{1}(y,z)+\mathcal{O}(\rho^{2}), \tau\rho T_{2}(y,z)+\mathcal{O}(\rho^2),\mathcal{O}(\rho^{2}),\mathcal{O}(\rho^{2}))\mid T_1,T_2 \text{ are arbitrary functions of $y,z$}\right\}
\end{align}
forms a closed algebra.
A straightforward calculation shows that Eq.~\eqref{eq_integrability_background}, i.e., the integrability condition at the background metric, is satisfied.

\vspace{\baselineskip}
\uline{ Step 4: Calculate the set of metrics.}\\
Since we investigate the asymptotic symmetries near the Rindler horizon, let us identify the asymptotic behavior of all the diffeomorphisms $\phi^\mu(x)$ generated by the Lie algebra $\mathcal{A}$. 

We here first calculate the asymptotic behavior of the diffeomorphisms in the form of 
$\phi_\xi^\mu(x)\coloneqq \exp[\xi](x^\mu)$ for $\xi\in\mathcal{A}$, where $\exp[\ ]$ is an exponential map. 

Introducing a real parameter $\lambda$ and calculating the integral curve $\varphi^{\mu}_{\lambda}(x)\coloneqq \exp[\lambda \xi](x^\mu)$  of the vector field $\xi$, the diffeomorphism $\phi_{\xi}^\mu(x)$ is given by $\phi_{\xi}^\mu(x)=\varphi^\mu_{\xi;\lambda=1}(x)$. 
The integral curve is the solution of the following differential equation:
\begin{align}
    \frac{\dd}{\dd\lambda}\varphi_{\xi;\lambda}^\mu(x) =\xi^\mu (\varphi(x)),\quad \varphi_{\xi;0}^\mu(x)=x^\mu.\label{eq_integral_curve_ODE}.
\end{align}
Any vector field $\xi$ of the algebra $\mathcal{A}$ can be decomposed into two parts:
\begin{align}
    \xi^\mu(x) &=\Xi^\mu(x)+h^\mu(x),\\
    \Xi^\mu(x)&\coloneqq (\tau F_{1}(y,z), \tau\rho F_{2}(y,z),0,0),\\
    h^\mu(x)&\coloneqq 
    (\mathcal{O}(\rho^{2}),\mathcal{O}(\rho^2),\mathcal{O}(\rho^{2}),\mathcal{O}(\rho^{2})) \quad (\rho \to 0),
\end{align}
where $F_1$ and $F_2$ are arbitrary functions of $(y,z)$. When $\xi=\Xi$, the solution of the differential equation is straightforwardly calculated as
\begin{align}
    \varphi^\mu_{\Xi;\lambda}(x)=\left(\tau e^{F_{1}(y,z)\lambda}, \rho\exp\left(\frac{F_{2}(y,z)}{F_{1}(y,z)}\tau\left(e^{F_{1}(y,z)\lambda}-1\right)\right), y,z\right).
\end{align}
In Appendix~\ref{Flow_proof}, it is proven that
\begin{align}
    \varphi_{\xi;\lambda}^\mu(x)= \varphi_{\Xi;\lambda}(x)+(\mathcal{O}(\rho^{2}),\mathcal{O}(\rho^2),\mathcal{O}(\rho^{2}),\mathcal{O}(\rho^{2})) \quad (\rho \to 0).
\end{align}
This is the asymptotic behavior of the integral curve. 
Thus, the asymptotic behavior of the diffeomorphism $\phi_\xi^\mu(x)=\exp[\xi](x^\mu)$ is given by
\begin{align}
    \phi_\xi^\mu(x)&=\varphi_{\xi;\lambda=1}^\mu(x)\nn\\
    &=\phi_{\Xi}^\mu(x)+(\mathcal{O}(\rho^{2}),\mathcal{O}(\rho^2),\mathcal{O}(\rho^{2}),\mathcal{O}(\rho^{2}))\nn\\
    &=\left(\tau e^{F_{1}(y,z)}, \rho\exp\left(\frac{F_{2}(y,z)}{F_{1}(y,z)}\tau\left(e^{F_{1}(y,z)}-1\right)\right), y,z\right)+(\mathcal{O}(\rho^{2}),\mathcal{O}(\rho^2),\mathcal{O}(\rho^{2}),\mathcal{O}(\rho^{2}))\label{eq:asymp_single}
\end{align}
as $\rho\to 0$. 

So far, we have calculated the asymptotic behavior of the diffeomorphisms in the form of $\phi_\xi^\mu(x)=\exp[\xi](x^\mu)$ for $\xi\in \mathcal{A}$. 
In general, diffeomorphisms generated by $\mathcal{A}$ and  connected to the identity transformation are given by a product of such maps, i.e.,
\begin{align}
    (\phi_{\xi^{(1)}}\circ \phi_{\xi^{(2)}}\circ\cdots \circ \phi_{\xi^{(N)}})(x)
\end{align}
for some $N$. 
Let us analyze the asymptotic behavior for $N=2$. For two vector fields
\begin{align}
    \left(\xi^{(i)}\right)^{\mu}(x) & =(\tau F_{1}^{(i)}(y,z)+\mathcal{O}(\rho^2), \tau\rho F_{2}^{(i)}(y,z)+\mathcal{O}(\rho^2),\mathcal{O}(\rho^2),\mathcal{O}(\rho^2)),\quad i=1,2,
\end{align}
as $\rho\to 0$, 
Eq.\eqref{eq:asymp_single} implies that
\begin{align}
    &(\phi_{\xi^{(1)}}\circ \phi_{\xi^{(2)}})^\mu(x)\nn\\
    &=\left(\tau e^{\tilde{F}_{1}(y,z)}, \rho\exp\left(\frac{\tilde{F}_{2}(y,z)}{\tilde{F}_{1}(y,z)}\tau\left(e^{\tilde{F}_{1}(y,z)}-1\right)\right),y,z\right)+(\mathcal{O}(\rho^{2}),\mathcal{O}(\rho^2),\mathcal{O}(\rho^{2}),\mathcal{O}(\rho^{2})),
\end{align}
where we have defined
\begin{align}
    \tilde{F}_1(y,z)&\coloneqq F_1^{(1)}(y,z)+F_1^{(2)}(y,z),\nn\\ \tilde{F}_2(y,z)&\coloneqq \tilde{F}_1(y,z)\left(\frac{F_{2}^{(2)}(y,z)}{F_{1}^{(2)}(y,z)}\left(e^{F_{1}^{(2)}(y,z)}-1\right) +\frac{F_{2}^{(1)}(y,z)}{F_{1}^{(1)}(y,z)} e^{F^{(2)}_1(y,z)}\left(e^{F_{1}^{(1)}(y,z)}-1\right)\right).
\end{align}
Repeating the same argument, it is shown that the asymptotic behavior of a general diffeomorphism $\chi_{(F_1,F_2)}$ is characterized by two real functions $F_1$ and $F_2$ of $(y,z)$ as
\begin{align}
   \chi_{(F_1,F_2)}^\mu(x) =\left(\tau e^{F_{1}(y,z)}, \rho\exp\left(\frac{F_{2}(y,z)}{F_{1}(y,z)}\tau\left(e^{F_{1}(y,z)}-1\right)\right), y,z\right)+(\mathcal{O}(\rho^{2}),\mathcal{O}(\rho^2),\mathcal{O}(\rho^{2}),\mathcal{O}(\rho^{2}))
\end{align}
for $\rho \to 0$. 

Thus, the asymptotic behavior of the components of the metrics in question is characterized by arbitrary functions $F_1$ and $F_2$ of $(y,z)$ as
\begin{align}
    \left(g_{\mu\nu}^{(F_1,F_2)}(x)\right) & \coloneqq \left(\frac{\partial \chi_{(F_{1},F_{2})}^{\alpha}}{\partial x^{\mu}}\frac{\partial \chi_{(F_{1},F_{2})}^{\beta}}{\partial x^{\nu}}\bar{g}_{\alpha\beta}(\chi_{(F_{1},F_{2})}(x))\right)\nonumber \\
                               & =
    \begin{pmatrix}
        J_{11}\rho^{2} & J_{12}\rho & J_{1y}\rho^{2} & J_{1z}\rho^{2} \\
        J_{12}\rho     & J_{22}     & J_{2y}\rho     & J_{2z}\rho     \\
        J_{1y}\rho^{2} & J_{2y}\rho & 1              & 0              \\
        J_{1z}\rho^{2} & J_{2z}\rho & 0              & 1
    \end{pmatrix} + (\text{higher order term}),\label{eq:asym_metric}
\end{align}
where we have defined
\begin{align}
    J_{11} (\tau,y,z)&\coloneqq  e^{2f(y,z)\tau}\left(-\kappa^{2}e^{2F_{1}(y,z)}+f^{2}(y,z)\right),\nn \\
    J_{12}(\tau,y,z) & \coloneqq f(y,z)e^{2f(y,z)\tau},\ J_{1A}(\tau,y,z)\coloneqq \tau e^{2f(y,z)\tau}(-\kappa^{2}\partial_{A}F_{1}(y,z)e^{2F_{1}(y,z)} + f(y,z)\partial_{A}f(y,z)) \nn ,\\
    J_{22}(\tau,y,z) & \coloneqq e^{2f(y,z)\tau},\ J_{2A}(\tau,y,z) \coloneqq \tau\partial_{A}f(y,z)e^{2f(y,z)\tau},
\end{align}
and
\begin{align}
    f (y,z)\coloneqq  \frac{F_{2}(y,z)}{F_{1}(y,z)}\left(e^{F_{1}(y,z)}-1\right).
\end{align}
As explicit calculations show, it turns out that the second term in Eq.~\eqref{eq:asym_metric} does not affect the integrability condition nor the expression of the charges. 

\vspace{\baselineskip}
\uline{Step 5: Check the integrability condition.}\\
A straightforward but lengthy calculation shows that the integrand of Eq.~\eqref{eq_integrability} is $\mathcal{O}(\rho)$ as $\rho\to 0$ for any metric given in Eq.~\eqref{eq:asym_metric}. Therefore, the integrability condition is satisfied. 

\vspace{\baselineskip}
\uline{Step 6: Calculate the charges.}\\
To calculate the charges for $V_{1}, V_{2}$ defined in Eq.~\eqref{eq_vector_fields_st_sd}, we need $Q^{\tau\rho}, \Theta^{\tau}$ and $\Theta^{\rho}$ in Eq.~\eqref{eq_charge_int_along_path}. 
Since the integrability condition is satisfied, the parametrization of metric in Eq.~\eqref{eq_charge_int_along_path} can be taken arbitrarily.
In order to calculate the charges at metric $g_{\mu\nu}^{(F_1,F_2)}(x)$ given in Eq.~\eqref{eq:asym_metric}, 
we here adopt following: 
\begin{align}
    g_{\mu\nu}(x;\lambda) & =\frac{\partial \chi_{(\lambda F_{1},\lambda F_{2})}^{\alpha}}{\partial x^{\mu}}\frac{\partial \chi_{(\lambda F_{1},\lambda F_{2})}^{\beta}}{\partial x^{\nu}}\bar{g}_{\alpha\beta}(\chi_{(\lambda F_{1},\lambda F_{2})}(x)).
\end{align}
For $\lambda=1$, $
    g_{\mu\nu}(x;\lambda=1)=g_{\mu\nu}^{(F_1,F_2)}(x)$,
while for $\lambda =0$, $
    \left(g_{\mu\nu}(x;\lambda=0)\right)=\left(\bar{g}_{\mu\nu}(x)\right)$ up to the higher order terms in Eq.~\eqref{eq:asym_metric}, which does not affect the charges, shown as follows: 
From Eq.~\eqref{eq:Komar}, we get
\begin{align}
	Q^{\tau\rho}\left[V_{1}\right]\biggl|_{\left(g_{\mu\nu}(x;\lambda)\right)} & = \frac{ T_{1}}{8\pi G \kappa}e^{-\lambda F_{1}}\left(\kappa^{2}e^{2\lambda F_{1}}\tau + \frac{f}{2}\right)+\mathcal{O}(\rho) \\
	Q^{\tau\rho}\left[V_{2}\right] \biggl|_{\left(g_{\mu\nu}(x;\lambda)\right)}& = \frac{T_{2}}{16\pi G \kappa}e^{-\lambda F_{1}}+\mathcal{O}(\rho)
\end{align}
as $\rho\to 0$. On the other hand,  from Eq.~\eqref{eq:presymplectic_potential}, we have
\begin{align}
	\Theta^{\tau} & =\mathcal{O}(\rho)\\
	\Theta^{\rho} & =-\frac{\kappa }{8\pi G}\partial_{\lambda} (e^{\lambda F_{1}})+\mathcal{O}(\rho)
\end{align}
as $\rho \to 0$. 
Thus, the second term in Eq.~\eqref{eq:asym_metric} does not contribute to the expression of the charges. 

From Eq.~\eqref{eq_charge_int_along_path}, the charges are evaluated as
\begin{align}
    H[V_{1}] & = \frac{1}{16\pi G\kappa}\int \dd y\dd z\  T_{1}(y,z)\frac{F_{2}(y,z)}{F_{1}(y,z)}\left(1-e^{-F_{1}(y,z)}\right), \label{eq_charge_V_1_Rindler} \\
    H[V_{2}] & =\frac{1}{16\pi G\kappa}\int \dd y\dd z\  T_{2}(y,z)\left(e^{-F_{1}(y,z)}-1\right).
    \label{eq_charge_V_2_Rindler}
\end{align}
where the reference of the charges are chosen so that they vanish at the background metric, which corresponds to the case where $F_1=F_2=0$.
The transformation generated by the vector fields $V_1$ and $V_2$ is an example of superdilatation.
Since the Rindler horizon can be interpreted as the horizon of a Schwarzschild black hole in the limit of infinite black hole mass, it would be interesting to investigate a similar asymptotic symmetry on the latter one.
To the authors' knowledge, the algebra of charges corresponding to the supardilatation on the horizon has not been investigated neither in the Rindler spacetime nor in the Schwarzschild spacetime in prior researches.
\section{summary}\label{sec:summary}
In this paper, we have proposed a useful approach to construct integrable charges which form a non-trivial algebra in general spacetime.
Our approach using the guiding principle in Eq.~\eqref{eq_non-triviality_background} may significantly reduce the effort involved in finding proper asymptotic conditions by trials and errors in the conventional approach.
In particular, a key idea of our approach is to use Eq.~\eqref{eq_non-triviality_background} to find an algebra of symmetries with a non-vanishing Poisson bracket at the background metric $\bar{g}_{\mu\nu}$. The metrics connected to the background metric through a diffeomorphism generated by the Lie algebra $\mathcal{A}$ satisfying Eq.~\eqref{eq_non-triviality_background} can be physically distinguished from each other since the Poisson brackets do not vanish.

In our analysis, we have investigated a set of metrics which are connected to a fixed background metric by diffeomorphisms generated by a Lie algebra of vector fields.
Since all the metrics are diffeomorphic to the background metric, it is possible to investigate the properties of the asymptotic symmetries of the background spacetime.
The set in our approach is different from that in the conventional approach, where the set of metrics are defined by their asymptotic behaviors.

As an explicit example, we have analyzed the asymptotic symmetries on the Rindler horizon in $(1+3)$-dimensional Rindler spacetime.
Equation~\eqref{eq_non-triviality_Rindler} is the general result of Eq.~\eqref{eq_non-triviality_background} for arbitrary vector fields evaluated at the Rindler horizon.
From this formula, we can read out candidates of transformations which yields a non-trivial charge algebra.
It is shown that the supertranslation and superrotation on the Rindler horizon can be found in our approach, which is known to be integrable \cite{Hotta_2016}.
In addition, we have found a new class of symmetries, which generates position dependent dilatations in  time and in the direction perpendicular to the horizon.
We have termed such a transformation superdilatation.
For a concrete example of superdilatation algebra, we have shown that the charges are integrable. The explicit expressions of the charges are given in Eqs.~\eqref{eq_charge_V_1_Rindler} and \eqref{eq_charge_V_2_Rindler}.
Of course, our analysis here in $(1+3)$-dimensional Rindler spacetime can be directly extended to $(1+D)$-dimensional case with any $D\geq 2$. 
It remains an open problem whether there are such dilatation-like asymptotic symmetries in other setups. It will also be interesting to investigate whether known results can be reproduced with our approach, e.g., a class of dilatations at null infinity of asymptotic flat spacetime \cite{Haco2017}. 

So far, we have started with two vector fields $V_1$ and $V_2$ satisfying Eq.~\eqref{eq_integrability_background} and constructed a minimal Lie algebra $\mathcal{A}$ spanned by the vector fields and their commutators.
This approach enables us to find building blocks of the asymptotic symmetries.
To proceed the classification of the symmetry in general relativity, it will be quite interesting to investigate how the charge algebra changes by adding other elements to $\mathcal{A}$.
It is also interesting to derive a condition under which Eq.~\eqref{non_triviality} holds at a particular metric $g_{\mu\nu}$ but not at the background metric $\bar{g}_{\mu\nu}$.

Although we have used our approach to analyze the asymptotic symmetries on the Rindler horizon, it is applicable to an arbitrary spacetime. 
For background spacetimes without symmetry, it may turn out that the left hand side of Eq.~\eqref{eq_non-triviality_background} vanishes, suggesting that there is no asymptotic symmetry. 
We expect that our approach will be helpful to investigate other important spacetimes, such as black holes, the de Sitter spacetime and the anti-de Sitter spacetime.

\begin{acknowledgments}
    The authors thank Ursula Carow-Watamura, Hiroyuki Kitamoto, Kohei Miura, Kengo Shimada, Naoki Watamura, Satoshi Watamura, Masaki Yamada and Kazuya Yonekura for useful discussions.
    This research was partially supported by JSPS KAKENHI Grants No. JP18J20057 (K.Y.), No. JP19K03838 (M.H.) and 21H05188(M.H.), and by Graduate Program on Physics for the Universe of Tohoku University (T.T. and K.Y.).
\end{acknowledgments}
\appendix
\section{An integral curve of vector field}
\label{Flow_proof}
In this appendix, we show that $\mathcal{O}(\rho^{2})$ terms in a vector field result in $\mathcal{O}(\rho^{2})$ terms in its integral curve. 

Let us define a vector field
\begin{align}
\xi^{\mu}(x) \coloneqq \Xi^{\mu}(x) + h^{\mu}(x)
\end{align}
where
\begin{align}
\Xi^{\mu}(x) &= (X^{\tau}(\tau,y,z), X^{\rho}(\tau,y,z)\rho, X^{y}(\tau,y,z), X^{z}(\tau,y,z) ), \label{X}\\
h^{\mu}(x) &= (\mo{\rho}{2}, \mo{\rho}{2}, \mo{\rho}{2}, \mo{\rho}{2})\ \ \ (\rho \to 0).
\label{h}
\end{align}
The integral curve of $\xi^{\mu}$ is defined as
\begin{align}
\varphi_{\xi;\lambda}^{\mu}(x) \coloneqq \exp[\lambda \xi]x^{\mu}=\sum_{n=0}^{\infty}\frac{\lambda^n}{n!}\xi^{n}x^{\mu}.
\end{align}
Where the action of $\xi^n$ on a function of $x^{\mu}$ is recursively defined as
\begin{align}
    \xi^{n}f(x) &= \xi^{n-1}\xi^{\mu}(x)\partial_{\mu}f(x) \qquad (n=1,2,3,\cdots), \\
    \xi^{0}f(x) &= f(x).
\end{align}
Defining
\begin{align}
\varphi_{\xi;\lambda,n}^\mu( x) \coloneqq \frac{\lambda^{n}}{n!}\xi^{n}x^{\mu},
\end{align}
we will show the following proposition:
\begin{proposition}
$\forall n \in \mathbb{Z}_{\geq 0}$,
\begin{align}
\varphi_{\xi;\lambda,n}^\mu( x) =\frac{\lambda^{n}}{n!} \Xi^{n}x^{\mu} + \epsilon_{n}^{\mu}(\lambda,x)
\end{align}
where the asymptotic behavior of the first term is $(\mo{1}{}, \mo{\rho}{}, \mo{1}{}, \mo{1}{})$
and that of $\epsilon_{n}^{\mu}(\lambda,x)$ is $(\mo{\rho}{2}, \mo{\rho}{2}, \mo{\rho}{2}, \mo{\rho}{2})$ as $\rho \to 0$.
\end{proposition}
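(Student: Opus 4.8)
The plan is to prove the proposition by induction on $n$ after rewriting the iterated action of $\xi$ in operator form. Introduce the first-order operators $\mathcal{D}\coloneqq\Xi^{\nu}\partial_{\nu}$ and $\mathcal{H}\coloneqq h^{\nu}\partial_{\nu}$, so that the operator denoted $\xi$ in the statement is $\mathcal{D}+\mathcal{H}$ and the one denoted $\Xi$ is $\mathcal{D}$. From the recursive definition $\xi^{n}f=\xi^{n-1}(\xi^{\mu}\partial_{\mu}f)$ one gets $\xi^{n}x^{\mu}=(\mathcal{D}+\mathcal{H})^{n}x^{\mu}$ and $\Xi^{n}x^{\mu}=\mathcal{D}^{n}x^{\mu}$, so the asserted leading term is $\tfrac{\lambda^{n}}{n!}\mathcal{D}^{n}x^{\mu}$ and the remainder is $\epsilon^{\mu}_{n}(\lambda,x)=\tfrac{\lambda^{n}}{n!}\big(\xi^{n}x^{\mu}-\mathcal{D}^{n}x^{\mu}\big)$. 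Since the prefactor $\lambda^{n}/n!$ is irrelevant to the behavior as $\rho\to 0$, the whole statement reduces to controlling how $\mathcal{D}$ and $\mathcal{H}$ act on the powers of $\rho$ carried by smooth functions near $\rho=0$.

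First I would record how $\mathcal{D}$ acts. Because the components $X^{\tau},X^{\rho},X^{y},X^{z}$ in Eq.~\eqref{X} depend only on $(\tau,y,z)$, the $\rho$-component $\Xi^{\rho}=X^{\rho}\rho$ carries an explicit factor $\rho$, so that $\Xi^{\rho}\partial_{\rho}=X^{\rho}(\rho\,\partial_{\rho})$ is built from the Euler operator $\rho\,\partial_{\rho}$, which leaves every power of $\rho$ invariant. Together with $\partial_{\tau},\partial_{y},\partial_{z}$, which do not touch $\rho$, this shows that $\mathcal{D}$ maps $\mo{\rho}{k}$ into $\mo{\rho}{k}$ for every $k$. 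Applied to the coordinate functions, whose leading orders in $\rho$ are $0,1,0,0$ for $\tau,\rho,y,z$, this immediately gives that $\mathcal{D}^{n}x^{\mu}$ has the asymptotic behavior $(\mo{1}{},\mo{\rho}{},\mo{1}{},\mo{1}{})$ claimed for the first term, and in particular that $\mathcal{D}$ preserves the class $\mo{\rho}{2}$.

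The crucial input is the estimate for $\mathcal{H}$: I claim it maps any smooth $\mo{1}{}$ function into $\mo{\rho}{2}$. Indeed each component of $h$ is $\mo{\rho}{2}$, and for the $h^{\tau},h^{y},h^{z}$ terms the accompanying derivative preserves the order, giving $\mo{\rho}{2}$ at once. The only term that could lower the power of $\rho$ is $h^{\rho}\partial_{\rho}$; but $h^{\rho}=\mo{\rho}{2}$ supplies two powers of $\rho$, while $\partial_{\rho}$ acting on a smooth bounded function removes none, so this term is $\mo{\rho}{2}$ as well. I expect this to be the main point of the argument: everything hinges on the vanishing of $\Xi^{\rho}$ at the horizon --- which turns $\Xi^{\rho}\partial_{\rho}$ into the order-preserving Euler operator --- and on all four components of $h$ carrying two powers of $\rho$, so that a single action of $\mathcal{H}$ already produces $\mo{\rho}{2}$; the remaining difficulty is purely the bookkeeping of these orders under $\partial_{\rho}$.

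With these two facts the induction closes. The base case $n=0$ is immediate, since $\xi^{0}x^{\mu}=x^{\mu}=\mathcal{D}^{0}x^{\mu}$ and $\epsilon^{\mu}_{0}=0$. Assuming $\xi^{n}x^{\mu}=\mathcal{D}^{n}x^{\mu}+\tilde{\epsilon}^{\mu}_{n}$ with $\tilde{\epsilon}^{\mu}_{n}=\mo{\rho}{2}$, I would apply $\mathcal{D}+\mathcal{H}$:
\begin{align}
\xi^{n+1}x^{\mu}=\mathcal{D}^{n+1}x^{\mu}+\big(\mathcal{D}\tilde{\epsilon}^{\mu}_{n}+\mathcal{H}\,\mathcal{D}^{n}x^{\mu}+\mathcal{H}\,\tilde{\epsilon}^{\mu}_{n}\big).
\end{align}
The term $\mathcal{D}\tilde{\epsilon}^{\mu}_{n}$ is $\mo{\rho}{2}$ because $\mathcal{D}$ preserves that class, while $\mathcal{H}\,\mathcal{D}^{n}x^{\mu}$ and $\mathcal{H}\,\tilde{\epsilon}^{\mu}_{n}$ are $\mo{\rho}{2}$ by the estimate for $\mathcal{H}$ (using $\mathcal{D}^{n}x^{\mu}=\mo{1}{}$ componentwise). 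Hence the bracket is $\tilde{\epsilon}^{\mu}_{n+1}=\mo{\rho}{2}$ and $\mathcal{D}^{n+1}x^{\mu}$ retains the stated leading behavior; multiplying by $\lambda^{n+1}/(n+1)!$ identifies $\epsilon^{\mu}_{n+1}$ and completes the induction.
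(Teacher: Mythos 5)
Your proposal is correct and follows essentially the same route as the paper: an induction on $n$ in which the action of $\xi=\Xi+h$ on the inductive decomposition produces the four terms $\mathcal{D}^{n+1}x^{\mu}$, $\mathcal{H}\,\mathcal{D}^{n}x^{\mu}$, $\mathcal{D}\tilde{\epsilon}^{\mu}_{n}$, $\mathcal{H}\,\tilde{\epsilon}^{\mu}_{n}$, each estimated exactly as in the paper's proof. Your operator framing merely makes explicit (via the Euler operator $\rho\,\partial_{\rho}$) the order-preservation property that the paper uses implicitly, so there is nothing to add.
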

\begin{proof}
We give a proof by induction on $n$.
Since $\varphi_{\xi;\lambda,0}^{\mu}(x) = x^{\mu}$, the proposition is clearly satisfied for $n=0$.
Assuming the proposition is satisfied for some integer $k \geq 0$, we have
\begin{align}
\varphi_{\xi;\lambda,k+1}^\mu(x ) &= \frac{\lambda}{k+1}\xi \varphi_{\xi;\lambda,k}^\mu(x)  \nn \\
&=\frac{\lambda}{k+1}(\Xi+ h)\left(\frac{\lambda^{k}}{k!}\Xi^{k}x^{\mu} + \epsilon_{k}^{\mu}(x)\right) \nn \\
&=\frac{\lambda^{k+1}}{(k+1)!}\left(
\Xi^{k+1}x^{\mu} + h^{\alpha}\partial_{\alpha}(\Xi^{k}x^{\mu}) \right)+\frac{\lambda}{k+1} \left(\Xi^{\alpha}\partial_{\alpha}\epsilon_{k}^{\mu}(x) + h^{\alpha}\partial_{\alpha}\epsilon_{k}^{\mu}(x)
\right).
\label{k+1}
\end{align}
By Eqs.~\eqref{X}, \eqref{h}, and the assumption for $k$, we have for each term in Eq.~\eqref{k+1}:
\begin{align}
\frac{\lambda^{k+1}}{(k+1)!}\Xi^{k+1}x^{\mu} = \frac{\lambda^{k+1}}{(k+1)!}\Xi^{\alpha}\partial_{\alpha}(\Xi^{k}x^{\mu}) &= \frac{\lambda^{k+1}}{(k+1)!}\Xi^{\alpha}\partial_{\alpha}(\mo{1}{}, \mo{\rho}{}, \mo{1}{}, \mo{1}{}) \nn \\
&=(\mo{1}{}, \mo{\rho}{}, \mo{1}{}, \mo{1}{}),
\end{align}
\begin{align}
\frac{\lambda^{k+1}}{(k+1)!}h^{\alpha}\partial_{\alpha}(\Xi^{k}x^{\mu}) &= \frac{\lambda^{k+1}}{(k+1)!}h^{\alpha}\partial_{\alpha}(\mo{1}{}, \mo{\rho}{}, \mo{1}{}, \mo{1}{}) \nn \\
&=(\mo{\rho}{2}, \mo{\rho}{2}, \mo{\rho}{2}, \mo{\rho}{2}),
\end{align}
\begin{align}
\frac{\lambda}{k+1}\Xi^{\alpha}\partial_{\alpha}\epsilon_{k}^{\mu}(x) &=\frac{\lambda}{k+1} X^{\alpha}\partial_{\alpha}(\mo{\rho}{2}, \mo{\rho}{2}, \mo{\rho}{2}, \mo{\rho}{2})\nn \\
&=(\mo{\rho}{2}, \mo{\rho}{2}, \mo{\rho}{2}, \mo{\rho}{2}),
\end{align}
\begin{align}
\frac{\lambda}{k+1}h^{\alpha}\partial_{\alpha}\epsilon_{k}^{\mu}(x) &=\frac{\lambda}{k+1} h^{\alpha}\partial_{\alpha}(\mo{\rho}{2}, \mo{\rho}{2}, \mo{\rho}{2}, \mo{\rho}{2}) \nn \\
&=(\mo{\rho}{3}, \mo{\rho}{3}, \mo{\rho}{3}, \mo{\rho}{3}).
\end{align}
Then for $n=k+1$ the proposition is also satisfied. Thus the proposition is satisfied for $\forall n \in \mathbb{Z}_{\geq 0}$.
\end{proof}
The integral curve generated by $\xi^{\mu}$ is now 
\begin{align}
\varphi_{\xi:\lambda}^{\mu}(x) &= \sum_{n=0}^{\infty}\varphi_{\xi;\lambda,n}^\mu( x) \nn \\
&=\sum_{n=0}^{\infty}\frac{\lambda^{n}}{n!}\Xi^{n}x^{\mu} + \sum_{n=0}^{\infty}\epsilon_{n}^{\mu}(\lambda,x) \nn \\
&=\exp(\lambda \Xi)x^{\mu} +  \sum_{n=0}^{\infty}\epsilon_{n}^{\mu}(\lambda, x) ,
\label{flow}
\end{align}
where $\epsilon_{n}^{\mu}(\lambda, x)$ is defined through the following recurrence relation:
\begin{align}
\epsilon_{0}^{\mu}(\lambda, x) &= 0, \\
\epsilon_{n+1}^{\mu}(\lambda,x) &=\frac{\lambda^{n+1}}{(n+1)!}h^{\alpha}\partial_{\alpha}(\xi^{n}x^{\mu})+\frac{\lambda}{n+1} \left(\xi^{\alpha}\partial_{\alpha}\epsilon_{n}^{\mu}(x) + h^{\alpha}\partial_{\alpha}\epsilon_{n}^{\mu}(x)
\right).
\end{align}
In Eq.~\eqref{flow}, the asymptotic behavior of the first term is $(\mo{1}{}, \mo{\rho}{}, \mo{1}{}, \mo{1}{})$,
while that of the second term  is $(\mo{\rho}{2}, \mo{\rho}{2}, \mo{\rho}{2}, \mo{\rho}{2})$ as $\rho\to 0$. Taking $\lambda=1$, a diffeomorphism $\phi_\xi^\mu(x)\coloneqq \varphi_{\xi;\lambda=1}^\mu(x)$ satisfies
\begin{align}
    \phi_\xi^\mu(x)=\phi_\Xi^\mu (x)+(\mo{\rho}{2}, \mo{\rho}{2}, \mo{\rho}{2}, \mo{\rho}{2})
\end{align}
as $\rho\to 0$. 

\section{Supertranslations and Superrotation charges on Rindler horizon}\label{sec:app_integrability}
In this appendix, we analyze the charges corresponding to two vector fields such that as $\rho \to 0$,
\begin{align}
    U_{1} &= (W(y,z) + \mathcal{O}(\rho^{2}), \mathcal{O}(\rho^{2}),\mathcal{O}(\rho^{2}),\mathcal{O}(\rho^{2})),\\
    U_{2} &= (\mathcal{O}(\rho^{2}), \mathcal{O}(\rho^{2}),R^{y}(y,z)+\mathcal{O}(\rho^{2}),R^{z}(y,z)+\mathcal{O}(\rho^{2}))
\end{align}
where $W$ and $R^{A}~(A=y,z)$ are arbitrary functions of $y,z$. They generate a well-known algebra of supertranslation and superrotation.

Since they satisfy 
\begin{align}
    [U_{1} , U_{2}] = U_{3}
\end{align}
where
\begin{align}
    U_{3} = (W'(y,z)+\mathcal{O}(\rho^{2}),\mathcal{O}(\rho^{2}),\mathcal{O}(\rho^{2}),\mathcal{O}(\rho^{2})),\quad {W'}(y,z) \coloneqq -R^{A}(y,z)\partial_{A}W(y,z),
\end{align}
the algebra $\mathcal{B}$ defined by
\begin{align}
    \mathcal{B}
    &\coloneqq \left\{U=( W(y,z)+\mathcal{O}(\rho^{2}),  \mathcal{O}(\rho^2),R^{y}(y,z)+\mathcal{O}(\rho^{2}),R^{z}(y,z)+\mathcal{O}(\rho^{2}))\right.\nonumber \\
    &\hspace{5cm}\left.\mid W,R^{A} \text{ are arbitrary functions of $y,z$}\right\}
\end{align}
forms a closed algebra.
A straightforward calculation shows that the integrability condition at the background metric is satisfied.

Let us introduce a real parameter $\lambda$ and calculate the integral curve $\varsigma_{\lambda}^{\mu}(x) \coloneqq \exp[\lambda \eta](x^{\mu})$ for $ \eta \in \mathcal{B}$, which satisfies the following differential equation:
\begin{align}
\frac{\dd}{\dd\lambda}\varsigma^{\mu}_{\eta;\lambda} = \eta^{\mu}(\varsigma(x)).
\end{align}
Any vector field $\eta$ of the algebra $\mathcal{B}$ can be decomposed into
\begin{align}
    \eta^{\mu}(x) &= H^{\mu}(x) + h^{\mu}(x),\\
    H^{\mu}(x) &\coloneqq (P(y,z), 0, G^{y}(y,z), G^{z}(y,z)), \\
    h^{\mu}(x) &= (\mathcal{O}(\rho^{2}),\mathcal{O}(\rho^{2}),\mathcal{O}(\rho^{2}),\mathcal{O}(\rho^{2})) \quad (\rho \to 0),
\end{align}
where $P$ and $G$ are arbitrary functions of $(y,z)$. 
As we have shown in Appendix~\ref{Flow_proof}, the asymptotic behavior of the solution of the differential equation is given by
\begin{align}
    \varsigma^{\mu}_{\eta;\lambda}=\varsigma^{\mu}_{H;\lambda}+  (\mathcal{O}(\rho^{2}),\mathcal{O}(\rho^{2}),\mathcal{O}(\rho^{2}),\mathcal{O}(\rho^{2}))
\end{align}
as $\rho\to 0$. 

Let us analyze the case where $\eta=H$. Note that $G^{y}$ and $G^z$ are functions of $(y,z)$. In addition, the initial condition $\varsigma^A_{H;\lambda=0}$ is independent of $\tau$ and $\rho$. Thus, the $A$-component of the integral curve can generally be written as
\begin{align}
    \varsigma^{A}_{H;\lambda}(\tau,\rho,y,z)=\widetilde{G}^A(y,z;\lambda),\quad A=y,z
\end{align}
for some functions $\widetilde{G}^A$ of $y,z$ and $\lambda$. Since $P$ is a function of $(y,z)$, the $\tau$-component of the differential equation is given by
\begin{align}
    \frac{\dd}{\dd\lambda}\varsigma^{\tau}_{H;\lambda}(\tau,\rho,y,z) = P(\widetilde{G}^y(y,z;\lambda),\widetilde{G}^z(y,z;\lambda)),\quad \varsigma^{\tau}_{\eta;\lambda=0}(\tau,\rho,y,z)=\tau.
\end{align}
Its solution is written as
\begin{align}
    \varsigma^{\tau}_{\eta;\lambda}(\tau,\rho,y,z)=\tau+\widetilde{P}(y,z;\lambda),
\end{align}
where $\widetilde{P}$ is some function of $y,z$ and $\lambda$.
Therefore, in general, the asymptotic behavior of the diffeomorphism $\sigma^{\mu}_{\eta}(x) \coloneqq \exp[\eta](x^{\mu})=\varsigma^{\mu}_{\eta;\lambda=1}(x)$ is given by
\begin{align}
    \sigma^{\mu}_{\eta}(x) &=
    (\tau + \widetilde{P}(y,z), \rho,\widetilde{G}^{y}(y,z),\widetilde{G}^{z}(y,z)) +
    (\mathcal{O}(\rho^{2}),\mathcal{O}(\rho^{2}),\mathcal{O}(\rho^{2}),\mathcal{O}(\rho^{2}))
\end{align}
as $\rho \to 0$, where we have re-defined
\begin{align}
    \widetilde{P}(y,z)\coloneqq \widetilde{P}(y,z;\lambda=1),\quad \widetilde{G}^A(y,z)\coloneqq \widetilde{G}^A(y,z;\lambda=1)\qquad A=y,z.
\end{align}

As we have done at Step~4 in Sec.~\ref{sec:asymptotic_sym_in_Rindler}, it can be confirmed that the asymptotic behavior of a general diffeomorphism $\gamma^{\mu}_{(\widetilde{P},\widetilde{G})}$ is characterized by three real functions $\widetilde{P}$ and $\widetilde{G}^{A}$ of $(y,z)$ as
\begin{align}
   \gamma^{\mu}_{(\widetilde{P},\widetilde{G})}(x) = (\tau + \widetilde{P}(y,z), \rho,\widetilde{G}^{y}(y,z),\widetilde{G}^{z}(y,z)) +
    (\mathcal{O}(\rho^{2}),\mathcal{O}(\rho^{2}),\mathcal{O}(\rho^{2}),\mathcal{O}(\rho^{2}))
\end{align}
for $\rho \to 0$. Thus, the asymptotic behavior of the components of the metrics in question is characterized by arbitrary functions $\widetilde{P}$ and $\widetilde{G}^A$ of $(y,z)$ as
\begin{align}
    \left(g_{\mu\nu}^{(\widetilde{P},\widetilde{G})}(x)\right) & \coloneqq \left(\frac{\partial \gamma_{(\widetilde{P},\widetilde{G})}^{\alpha}}{\partial x^{\mu}}\frac{\partial \gamma_{(\widetilde{P},\widetilde{G})}^{\beta}}{\partial x^{\nu}}\bar{g}_{\alpha\beta}(\gamma_{(\widetilde{P},\widetilde{G})}(x))\right)\nonumber \\
                               & =
    \begin{pmatrix}
        -\kappa^{2}\rho^{2} & 0 & L_{1y}\rho^{2} & L_{1z}\rho^{2} \\
        0     & 1    & 0     & 0     \\
        L_{1y}\rho^{2} & 0 & L_{yy}              & L_{yz}              \\
        L_{1z}\rho^{2} & 0 & L_{yz}              & L_{zz}
    \end{pmatrix} + (\text{higher order term}),
\end{align}
where we have defined
\begin{align}
    L_{1A}(y,z) \coloneqq -\kappa^{2}\partial_{A}\widetilde{P}(y,z),\quad 
    L_{AB}(y,z) \coloneqq \partial_{A}\widetilde{G}^{y}(y,z)\partial_{B}\widetilde{G}^{y}(y,z) + \partial_{A}\widetilde{G}^{z}(y,z)\partial_{B}\widetilde{G}^{z}(y,z).
\end{align}
A straightforward calculation shows that the above metric satisfies the integrability condition.

Let us adopt the parametrization of metric as
\begin{align}
        g_{\mu\nu}(x;\lambda) & \coloneqq \frac{\partial \gamma_{(\lambda\widetilde{P},\lambda\widetilde{G})}^{\alpha}}{\partial x^{\mu}}\frac{\partial \gamma_{(\lambda\widetilde{P},\lambda\widetilde{G})}^{\beta}}{\partial x^{\nu}}\bar{g}_{\alpha\beta}(\gamma_{(\lambda\widetilde{P},\lambda\widetilde{G})}(x)).
\end{align}
On one hand, from Eq.~\eqref{eq:Komar}, we get
\begin{align}
	Q^{\tau\rho}[U_{1}]\biggl|_{g_{\mu\nu}(x;\lambda)} &=\lambda^{2} \frac{\sqrt{L}\kappa}{8\pi G} W + \mathcal{O}(\rho), \\
	Q^{\tau\rho}[U_{2}]\biggl|_{g_{\mu\nu}(x;\lambda)} &=- \lambda^{3}\frac{\sqrt{L}}{8\pi G \kappa}R^{A}L_{1A} + \mathcal{O}(\rho)
\end{align}
as $\rho \to 0$, where we have defined $L\coloneqq L_{yy}L_{zz}-L_{yz}^{2}$. On the other hand, from Eq.~\eqref{eq:presymplectic_potential}, we have
\begin{align}
    \Theta^{\rho} = \mathcal{O}(\rho)
\end{align}
as $\rho \to 0$.

Therefore, from Eq.~\eqref{eq_charge_int_along_path}, the charges are evaluated as
\begin{align}
    H[U_{1}] &= \frac{\kappa}{8\pi G}\int \dd y\dd z \sqrt{L(y,z)}W(y,z),\\
    H[U_{2}] &= -\frac{1}{8\pi G \kappa}\int \dd y \dd z \sqrt{L(y,z)}R^{A}(y,z)L_{1A}(y,z),
\end{align}
where the references of the charges are chosen so that they vanish at the background metric.

\bibliography{Charge}

\end{document}